\documentclass[onecolumn]{IEEEtran}
\IEEEoverridecommandlockouts

\usepackage{amsfonts}
\usepackage{amsmath}
\usepackage{amssymb}
\usepackage{algorithm,algorithmic}
\usepackage{array}
\usepackage{textcomp}
\usepackage{stfloats}
\usepackage{xspace}
\usepackage{caption}
\usepackage{multirow}
\usepackage{tikz}
\usepackage{pifont}
\usepackage{inconsolata}
\usepackage[export]{adjustbox}
\usepackage{flushend}
\usepackage{lipsum}
\usepackage{dsfont}
\usepackage{scalefnt}
\usepackage{fontawesome5}

\usepackage[utf8]{inputenc} 
\usepackage[T1]{fontenc}
\usepackage[scaled=.85]{beramono}

\usepackage{setspace}
\usepackage{mathtools}
\usepackage{amsthm}
\usepackage{url}
\usepackage{verbatim}
\usepackage{graphicx}
\usepackage{euscript}
\usepackage{cite}
\usepackage{soul}
\usepackage{mathrsfs,xspace,multicol}
\usepackage[usenames,dvipsnames,table]{xcolor}
\usepackage{tabularx}
\usepackage{booktabs}
\usepackage[normalem]{ulem}
\usepackage[inline, shortlabels]{enumitem}
\setlist{itemjoin ={,\enspace},itemjoin* = {, and\enspace}}

\usepackage[font=footnotesize]{caption}
\usepackage{hyperref}
\hypersetup{
	colorlinks=true,
	linkcolor=black,
	filecolor=black,      
	urlcolor=black,
	citecolor =black,
}

\newtheorem{theorem}{Theorem}

\def\khz								  {\texttt{kHz}\xspace}

\def\g                                      {g}

\def\l										{\left(}
\def\r								       {\right)}

\def\Z									  {\mathbb Z}
\def\R									  {\mathbb R}

\def\DE					                {\stackrel{\rm{def}}{=}}
\def\eg					                {\emph{e.g.\xspace}}
\def\ie					         	      {\emph{i.e.\xspace}}

\def\usf					   		   {USF\xspace}
\def\g               				 	 {g}
\def\HDR							{HDR\xspace}

\def\Dis					    {\mathscr{Q}}

\newcommand{\norm}[1]           {\left\lVert {#1} \right\rVert }
\newcommand{\fig}[1]			    {Fig.~\ref{#1}}

\newcommand{\id}[1]					{\mathbb{I}_{#1} }
\newcommand{\sqb}[1]		    {\left[ #1 \right]}

\newcommand\rob[1]				  {\l #1 \r}

\newcommand\tabref[1]		    {Table \ref{#1}}

\newcommand\thmref[1]		  {Theorem \ref{#1}}

\newcommand{\bpara}[1]		 {\smallskip \noindent {\textbf{#1}}}

\newcommand{\Lp}[1]            {{\mathbf{L}}_{{#1}}}
\newcommand{\lp}[1]             {{\ell}_{#1}}

\newcommand{\abs}[1]			{\left| #1\right|}

\newcommand\subfig[3]	    {Fig.~\ref{#1} (${\mathsf{#2}}_{#3}$)}
\newcommand{\fe}[1]				{\left[\kern-0.30em\left[#1\right]\kern-0.30em\right]}
\newcommand{\ft}[1]				{\left[\kern-0.15em\left[#1\right]\kern-0.15em\right]}
\newcommand{\flr}[1]			{\left\lfloor #1 \right\rfloor}

\newcommand{\MOL}[2]	   {\mathscr{M}_{#1} ({#2} )}

\renewcommand\bar\underline

\newcommand{\QO}[2]		  {\Dis_{#1} \rob{#2} }

\usepackage[most,breakable]{tcolorbox}

\tcbset{
	left = 0pt,
	top = 0pt,
	bottom = 0pt,
	breakable,
	before skip=0.2cm,
	after ={\parindent0em},
	colback=yellow!10!white, colframe=red!50!black, 
	highlight math style= {enhanced, 
		colframe=red,colback=red!10!white,
		boxsep=0pt,
}}

\newtcbox{\abox}[1][brown]{on line,
	arc=0pt,
	colback=#1!10!white,
	colframe=#1!50!black,
	arc=0pt,
	outer arc=0pt,
	top=1pt,
	bottom=0.5pt,
	left=0mm,
	right=0mm,
	leftrule=0pt,
	rightrule=0pt,
	toprule=0.3mm,
	bottomrule=0.3mm,
	boxsep=0.1mm
}

\makeatletter
\def\moverlay{\mathpalette\mov@rlay}
\def\mov@rlay#1#2{\leavevmode\vtop{
		\baselineskip\z@skip \lineskiplimit-\maxdimen
		\ialign{\hfil$\m@th#1##$\hfil\cr#2\crcr}}}
\newcommand{\charfusion}[3][\mathord]{
	#1{\ifx#1\mathop\vphantom{#2}\fi
		\mathpalette\mov@rlay{#2\cr#3}
	}
	\ifx#1\mathop\expandafter\displaylimits\fi}
\makeatother
\makeatletter
\newcommand*{\rom}[1]{\expandafter\@slowromancap\romannumeral #1@}
\makeatother

\def\qn							    {q}

\def\Td							   {T_d}
\def\lam   						  {\lambda}			
\def\glcm					    {\lambda_{\mathsf{lcm}}}
\def\tauL						 {\tau_{n} }

\def\Lc							  {{c}}
\def\B							   {\mathbf{B}}
\def\mgn					  {\mathbf{\g}_{n}}
\def\tmgn					 {\mathbf{\widetilde{\g}}_{n}}
\def\myn					  {\mathbf{y}_{n}}

\def\mgamman	     {\boldsymbol{\gamma}_{n}}
\def\tmgamman	    {\boldsymbol{\widetilde{\gamma}}_{n}}

\def\mL							{\boldsymbol{\Lambda}}
\def\mK							{\boldsymbol{\kappa}}
\def\hn							 {h_{n}}
\def\Gg							{\mathscr{G}_{\g,n}}

\def\Tc							{T_c^{n}}
\def\Tn							{T_n}
\def\Tll					     {T_l}
\def\tc							 {t_c^{n}}
\def\mG						 {\boldsymbol{\Gamma}}

\newcommand{\BL}[2]							 {#1 \in \mathcal{B}_{#2}}
\newcommand{\yl}[1]							   {y_{#1}}

\newcommand{\yln}[1]					     {y_{#1}^{n}}
\newcommand{\gln}[1]					     {\g_{#1}^{n}}

\renewcommand{\gcd}[1]					 {\mathsf{gcd}\rob{#1}}
\newcommand{\lcm}[1]					   {\mathsf{lcm}\rob{#1}}
\newcommand{\qnl}[1]					     {\qn_{#1}}
\newcommand{\gl}[1]					          {\g_{#1}}

\newcommand{\gammal}[1]			     {\gamma_{#1}}

\newcommand{\gammaln}[1]		    {\gamma_{#1}^{n}}

\newcommand{\kappal}[1]			        {\kappa_{#1}}
\newcommand{\Gammal}[1]			     {\Gamma_{#1}}
\newcommand{\Dg}[1]						    {\eta_{#1}}
\newcommand{\wl}[1]							 {\eta_{#1}}

\newcommand{\Tl}[1]							 {T_{#1}^{n}}

\definecolor{rowgray}{gray}{0.94}
\definecolor{LightGrayCell}{gray}{0.90} % if not already defined

\begin{document}

\title{Asynchronous Multi-Channel USF: \\
Modified CRT for Modulo Unfolding
\\
\thanks{This work is supported by the European Research Council's Starting Grant for ``CoSI-Fold'' (101166158). Further details on {Unlimited Sensing} and materials on \textit{reproducible research} are available via  \href{https://bit.ly/USF-Link}{\texttt{https://bit.ly/USF-Link}}.}
}

\author{ 
Ruiming Guo 
and
Ayush Bhandari \\ 
Dept. of Electrical and Electronic Engg., 
{Imperial College London}, SW7 2AZ, UK. \\ 
\texttt{ruiming.guo@imperial.ac.uk} $\bullet$
\texttt{a.bhandari@imperial.ac.uk}
}

\maketitle

\vspace{1mm}

{

\centering
\sf
\color{blue} To appear in the proceedings of 2026 European Signal Processing Conference (EUSIPCO).

}

\vspace{1cm}

\begin{abstract}
The Unlimited Sampling Framework (USF) overcomes the traditional trade‑off between dynamic range and digital resolution, achieving performance unattainable with standard ADCs. Its multi‑channel extension (MC‑USF) enables reconstruction from multiple folded measurements at critical sampling rates. Existing MC‑USF methods typically rely on Chinese Remainder Theorem (CRT)-based unfolding, which requires strict channel‑level sampling synchronization and is therefore vulnerable to timing mismatch, jitter, and drift. This paper introduces an asynchronous MC‑USF architecture that eliminates the need for synchronization. By viewing spatial–temporal signal lifting as inducing smoothness over a graph of sensing channels, we develop a reconstruction strategy robust to temporal misalignment. Numerical experiments validate the approach, demonstrating accurate recovery and enabling more practical multi‑channel USF implementations.

\end{abstract}

\begin{IEEEkeywords}
Asynchronous sampling, multi-channel, quantization, non-linear acquisition and unlimited sensing.
\end{IEEEkeywords}

\tableofcontents

\newpage

\spacing{1.5}

\section{Introduction}

Unlimited Sensing Framework (\usf) \cite{Bhandari:2017:C,Bhandari:2020:Ja,Bhandari:2021:J} is a novel acquisition paradigm that rethinks conventional sampling pipeline by advocating controlled non-linearities at the front end.
USF intentionally operates signal folding in the analog domain and then reconstructs the signal digitally by exploiting structure in the underlying waveform. 
This approach enables sensing systems that simultaneously achieve high-dynamic-range and high-digital-resolution, beyond the fundamental limits imposed by classical linear analog-to-digital (ADC) architectures.
Shifting from single-channel acquisition \cite{Bhandari:2017:C,Bhandari:2020:Ja,Bhandari:2021:J,Guo:2023:C}, a natural extension of USF is multi-channel (MC) sensing \cite{Guo:2023:Ca,Guo:2024:J}, where the same underlying signal is observed through multiple, shifted views. 
Specifically, consider a bandlimited signal $\g\rob{t}$. We observe samples of the form $\gln{l} = \g\rob{ \Tn + \Tll }$, where $\{\Tn, \Tll \}$ represent temporal sampling instants and time-varying channel-dependent offsets. 
Typical examples include: $\Tn = n T, n\in\Z$ and $\Tll = l S$ may correspond to spatial separation in an array \cite{FernandezMenduina:2021:J}, time delays ($S = \Td$) in a multi-coset architecture \cite{Guo:2025:C,Guo:2025:Ca} (see \subfig{fig:demo}{b}{}), or other forms of structured diversity.     

The key insight is that the $L$-channels do not observe independent signals; rather, they capture different slices of the same evolution of $\g\rob{t}$. 
This space-time redundancy introduces strong structural constraints that can be employed algorithmically, enabling recovery at sampling rates below Nyquist or with relaxed hardware requirements. 
Classic examples include sub-Nyquist recovery of multiband \cite{Guo:2025:C,Guo:2025:Ca} or sum-of-sinusoids signals \cite{Guo:2024:J}, as well as recovery methods based on the Chinese Remainder Theorem (CRT) \cite{Li:2009:J,Wang:2010:J,Gong:2021:J,Xiao:2024:J,Guo:2025:J,Yan:2026:J}.   

CRT-based MC-\usf has received particular attention in recent years. 
For a fixed time index, collecting folded samples across multiple channels reduces to solving a system of congruence equations.
This provides two important benefits: reduced sampling rates and relaxed quantization requirements.
Together, these advantages translate into improved data efficiency and lower-power hardware implementations. 

\begin{figure}[!t]
\centering	\includegraphics[width=0.6\linewidth]{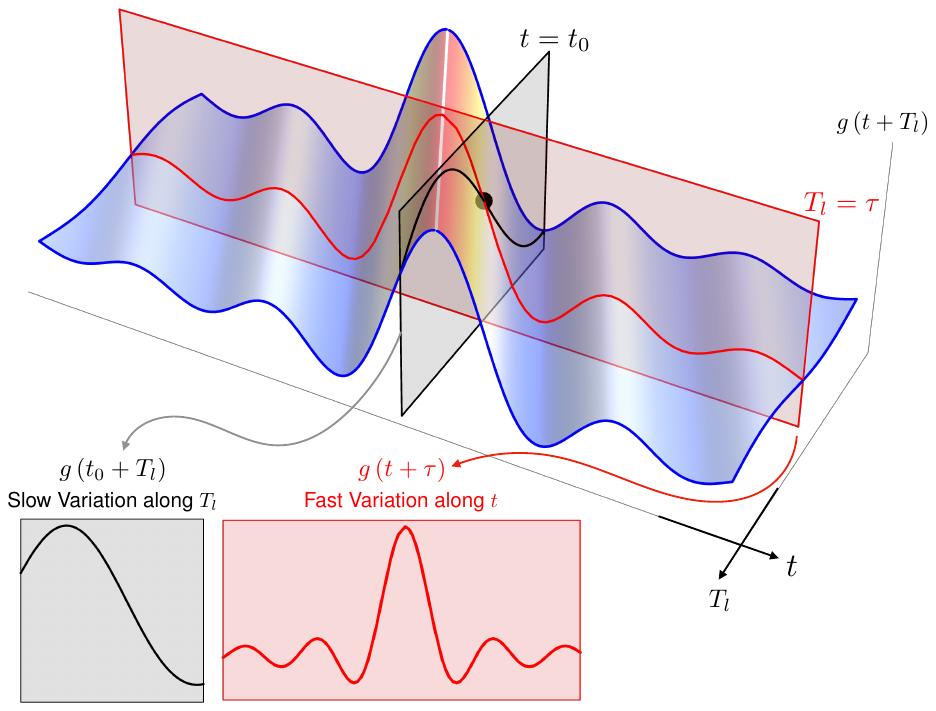}
\caption{
Lifting $g(t) = \operatorname{sinc}(t)$ to $\g\rob{t+\Tll}$ and the corresponding 2D geometry. Depending on the ratio between $t$ and $\Tll$, sampling along $\Tll$ captures a slower variation compared to temporal evolution along $t$.} 
\label{fig:Asyn}
\end{figure}

\bpara{Motivation.} Despite significant algorithmic advances, existing CRT formulations rely on a critical assumption: the samples across channels must be time-aligned or equispaced, \ie, $\Tll = 0$. This requirement ensures that the CRT structure holds exactly. However, in many practical scenarios, this assumption is violated: (i) channel clocks may be imperfectly synchronized or exhibit drift and jitter, (ii) implementation of irregular sampling patterns (\eg multi-coset), or (iii) the system may operate asynchronously by design. In such cases, the usual CRT breaks down, and existing recovery guarantees no longer apply.
This significantly limits the applicability of CRT-based approaches in real-world sensing systems.

\bpara{Contribution.} In this work, we overcome this limitation by extending CRT-based \usf to asynchronous multi-channel sampling. 
We show that even when the time-varying offsets $\{ \Tll \}$ are non-uniform or irregular, exact recovery remains possible under mild conditions. 
Our analysis shows that the inherent smoothness of bandlimited functions induces consistency constraints across channels that can replace strict time alignment. 
By leveraging these constraints, we unify synchronous and asynchronous MC sensing within a common USF pipeline and generalize the CRT to asynchronous settings while retaining its desirable properties for high-dynamic-range recovery. This translates into two main advantages:

\noindent {\bf Wide applicability.} We broaden the applicability of CRT-based recovery to practical systems with clock mismatch or asynchronous acquisition.

\noindent {\bf Backwards Compatibility.} Our formulation remains backward compatible with existing multi-channel architectures such as multi-coset sampling (see \subfig{fig:demo}{b}{}), enabling CRT-style reconstruction in cases that were preciously out of reach.

\section{Asynchronized MC Unlimited Sensing }
\label{sec:MC}

\bpara{Problem Statement.} Let $\BL{\g}{\Omega}$ be an \HDR, continuous-time signal with maximum angular frequency $\Omega$. In asynchronous \usf pipeline, the multi-channel (MC) samples are given by\footnote{$\id{N} = \{0,1,\cdots,N-1\}$, where $N\in\Z^{+}$.},
\begin{equation}
\label{eq:samples}	
\yl{l} \sqb{n} = \MOL{\lambda_{l}}{\g(nT + \Tl{l})},  \quad n\in\id{N}, \quad l\in\id{L}
\end{equation}
where $\MOL{\lambda}{\cdot}$ denote the centered modulo operator defined as,
\begin{equation}
\label{eq:map}
\mathscr{M}_{\lambda_{l}}
:g \mapsto 2\lambda_{l} \left( {\fe{ {\frac{g}{{2\lambda_{l} }} + \frac{1}{2} } } - \frac{1}{2} } \right),
\begin{array}{l}
\ft{g} \DE g - \flr{g}\\
\lambda_{l}>0
\end{array}.
\end{equation}
In view of \eqref{eq:map}, the \HDR samples can be decomposed as, 
\begin{equation}
\label{eq:decom}
\gln{l} = \yln{l} + 2\lambda_{l} \gammaln{l}, \;\; \gammaln{l}\in\Z, \; \mbox{and}\;
\begin{array}{l}
\gln{l} = \g(nT + \Tl{l}) \\
\yln{l} = \yl{l} \sqb{n}
\end{array}.
\end{equation}
\emph{Goal}: Given the folded samples $\{\yln{l}\}_{n\in\id{N}}^{l\in\id{L}}$, the task is to recover the residue matrix $\mG = \sqb{\gammaln{l}}_{n\in\id{N}}^{l\in\id{L}}$.

\begin{figure*}[!t]
\centering	\includegraphics[width=\linewidth]{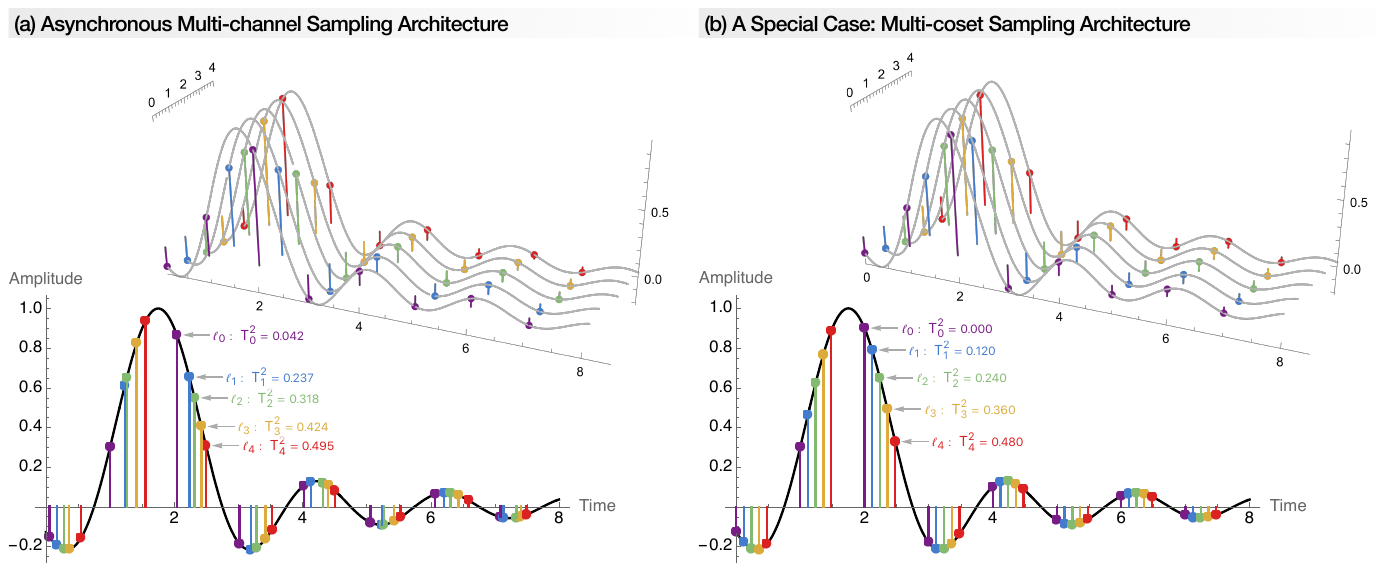}
\caption{
(a) Asynchronous MC sampling, which boils down to multi-coset architecture when $\Tl{l} = l\Td $. (b)  Despite the irregular setting of channel-offset $\Tl{l}$, the samples across channels at a fixed time instant exhibit certain smoothness (see slow-variation from $\Tl{0}$ to $\Tl{4}$ in both (a) and (b)). This channel redundancy imposes the smooth property of the complete graph constructed by $\{  \gln{l}\}_{l\in\id{L}}$ (also see \fig{fig:Asyn}).    
} 
\label{fig:demo}
\end{figure*}

Once $\mG$ is known, $\gln{l}$ can be recovered via \eqref{eq:decom} and thus, $\g\rob{t}$ can be reconstructed via standard bandlimited interpolation.
  
\bpara{Exact Recovery via Graph Modelling.} The recovery problem present in \eqref{eq:decom} amounts to a standard combinatorial optimization, which in general is messy and potentially intractable\footnote{Note that, the sampling step $T$ is not necessarily above the Nyquist rate. } as its computational complexity scales exponentially with the sample size $N$ and number of channels $L$.

To address this challenge, we employ a graph theory viewpoint. The multi-channel samples at a fixed time instant $n$ actually construct a complete graph with $L$ nodes. 
Let $\Gg = \rob{\{ \{ \gln{l} \}_{l\in\id{L}}\}, \mathrm{E}}$, where $\mathrm{E}$ is the edge set. The key insight in this paper is that, the bandlimitedness prior of $\g$ imposes smoothness constraint on $\Gg$—the quantity variation between arbitrary two nodes $\gln{l_{1}}, \gln{l_{2}}$ are \emph{bounded}. Mathematically, this graph smoothness condition can be algebraically expressed as,
\begin{equation}
\norm{ \B \mgn }_{\lp{\infty}} =
\mathop {\max}
\nolimits_{\substack{
l_{1}, l_{2}		
}}
\abs{  \gln{l_1} - \gln{l_2}  }
\leqslant \tauL, \quad \tauL > 0
\end{equation}
where $\mgn$ is the vector form of $\gln{l}$, and $\B\in\Z^{{(L-1)L}/{2}\times L}$ is its incidence matrix with $\B_{i,j} \in \{\pm1, 0\}, \forall i,j$ and $\norm{\B}_{\infty} = 2$. Here, $|{\sqb{ \B \mgn }_{j}}| = |\gln{l_1} - \gln{l_2}|$ measures the node quantity difference, and $\tauL$ characterizes the global smoothness of $\Gg$.

Inspired by the graph smoothness property, it can be proved that, the minimizer to the below integer programming,
\begin{equation}
\label{eq:min}	
\mathop {\min}
\nolimits_{\substack{
\tmgamman \in\Z^{L}		
}} \;  \norm{ \B \myn + 2\B\mL \tmgamman  }_{\lp{\infty}},\;\; \mL = \mathsf{diag}\rob{\{\lambda_{l}\}_{l\in\id{L}}}
\end{equation}
is \emph{unique} and \emph{exactly} coincides with the ground truth solution $\mgamman$, provided certain condition on graph smoothness $\tauL$.

\bpara{Constrained Graph Smoothness.} To quantify the interplay between signal prior and graph smoothness, we reorganize \eqref{eq:decom},
\begin{equation}
\label{eq:decom1}
\gl{\Lc}\sqb{n}  =  \rob{ \yl{l}  +   2\lambda_{l} \gammal{l}  + \Dg{l} } \sqb{n}, \quad
\begin{array}{l}
\Dg{l} = \gl{\Lc} - \gl{l}, \\
\gl{\Lc} = \g(nT + \Tc)
\end{array}	
\end{equation}
where $\Tc =(\max_{l} \Tl{l} + \min_{l} \Tl{l})/2$. 
In view of the definition of $\gln{l}$ above, we can derive that,
\begin{equation}
\abs{\Dg{l}\sqb{n}}	= \abs{ \int_{nT + \Tc}^{nT + \Tl{l}} \partial_{t} \g\rob{t}  dt  } \leqslant \abs{\Tc - \Tl{l}} \norm{ \partial_{t} \g }_{\Lp{\infty}}
\end{equation}
where $\partial_{t} \g$ is the derivative of $\g$. From Bernstein's inequality \cite{Bhandari:2020:Ja,Guo:2025:Ja}, we have $\BL{\g}{\Omega} \rightarrow \norm{ \partial_{t} \g }_{\Lp{\infty}} \leqslant \Omega \norm{\g}_{\Lp{\infty}}$, yielding,
\begin{equation}
\label{eq:BI}
\norm{ \Dg{l} }_{\lp{\infty}} \leqslant \abs{\Tc - \Tl{l}} \norm{ \partial_{t} \g }_{\Lp{\infty}} \leqslant \frac{1}{2}\Omega \tc \norm{\g}_{\Lp{\infty}}
\end{equation}
where $\tc = \max_{l_{1},l_{2}}\abs{ \Tl{l_{1}} - \Tl{l_{2}}}$. 
Consequently, plugging \eqref{eq:decom} and \eqref{eq:BI} into \eqref{eq:decom1}, we can deduce that,
\begin{equation*}
\norm{\wl{l}}_{\lp{\infty}} \leqslant \frac{1}{2} \Omega\tc \norm{\g}_{\Lp{\infty}}, \quad \forall l\in\id{L}
\end{equation*}
which suggests that, the maximum variation between arbitrary two nodes of $\Gg$ are upper bounded by,
\begin{equation}
\label{eq:c1}	
\norm{ \B \mgn }_{\lp{\infty}} \leqslant 2 \max_{l} \norm{\wl{l}}_{\lp{\infty}}
\leqslant  \Omega\tc \norm{\g}_{\Lp{\infty}} \DE \tauL.
\end{equation}
From \eqref{eq:c1}, the graph smoothness is governed by, (i) signal maximum frequency $\Omega$, (ii) input DR $\norm{\g}_{\Lp{\infty}}$ and (iii) maximal offset deviation $\tc$. Hence, the feasible solution set to \eqref{eq:min} can be achieved whenever a sufficient graph smoothness is guaranteed, which is stated in the following theorem:
\begin{figure*}[!t]
\centering	\includegraphics[width=0.9\linewidth]{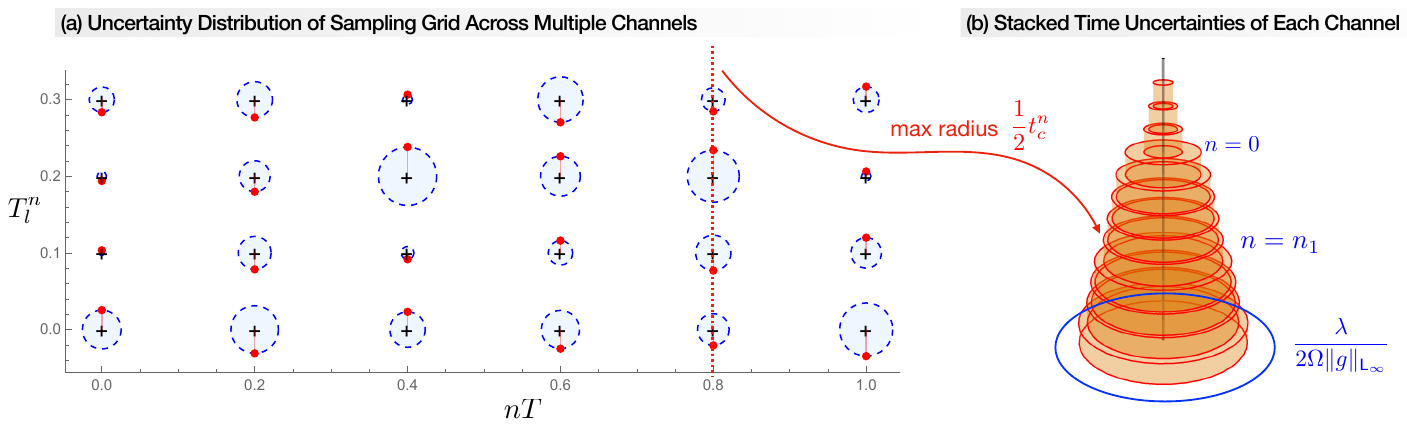}
\caption{
Visual illustration of exact graph recovery via smoothness constraint. (a) Uncertainty region of the sampling grid spanned by time-varying channel-dependent offsets $\Tl{l}$. (b) Stacked plot of the maximal sampling time offset across channels. A constrained graph smoothness (within the tolerance bound in (b)) guarantees the uniqueness and exactness of recovering $\Gg$ at each time instant-$n$ via constructed integer minimization in \eqref{eq:opt}, offering an efficient solution to the potentially intractable combinatorial optimization.
}
\label{fig:pattern}
\end{figure*}
\begin{theorem}
\label{thm:1}	
Let $\BL{\g}{\Omega}$ and the multi-channel folded measurements be $\yln{l} = \MOL{\lambda_{l}}{\g(nT + \Tl{l})}$. Then, $\{ \g(nT + \Tl{l})  \}_{n\in\id{N}}^{l\in\id{L}}$ can be exactly reconstructed, if $\lambda_{l} = \lam \kappal{l} $, $\lam\in\R^{+}$, $\{\kappal{l} \}_{l\in\id{L}}\in\Z_{+}^{L}$ are pair-wise co-prime, and 
\begin{equation}
\notag
\norm{\g}_{\Lp{\infty}}  < \min\rob{\tfrac{\lam}{\max\limits_{n} \max\limits_{l_{1},l_{2}}\abs{ \Tl{l_{1}} - \Tl{l_{2}}}  \Omega}, \lam \prod\limits_{l\in\id{L}} \kappal{l}  - \max_{l} \lambda_{l} }.
\end{equation}
\end{theorem}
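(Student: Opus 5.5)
The plan is to work at a fixed time index $n$ and show that the integer program \eqref{eq:min} has the ground-truth residue vector $\mgamman$ as its unique minimizer. Once $\mgamman$ is known for every $n\in\id{N}$, the samples follow from \eqref{eq:decom}. Write $\tau = \max_n \tauL$. The first inequality in the hypothesis, together with \eqref{eq:c1}, gives for every $n$
\[
\norm{\B\mgn}_{\lp{\infty}} \leqslant \Omega\, t_c^{n}\, \norm{\g}_{\Lp{\infty}} < \lam .
\]
Because $\B\myn + 2\B\mL\mgamman = \B\mgn$, the ground truth attains an objective value strictly below $\lam$. Any minimizer $\tmgamman$ must therefore also satisfy $\norm{\B\myn + 2\B\mL\tmgamman}_{\lp{\infty}} < \lam$. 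Setting $\tmgn = \myn + 2\mL\tmgamman$, every pairwise difference of the entries of $\tmgn$ is below $\lam$.

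\textbf{Key algebraic step.} Compare a feasible candidate with the truth. Let $\boldsymbol{\delta} = \tmgamman - \mgamman \in \Z^{L}$. Then
\[
\tmgn - \mgn = 2\lam\, \mK\, \boldsymbol{\delta}, \qquad \mK = \mathsf{diag}(\kappal{l}).
\]
By the triangle inequality, for all $l_1,l_2$,
\[
\abs{2\lam(\kappal{l_1}\delta_{l_1} - \kappal{l_2}\delta_{l_2})} < 2\lam,
\]
and since $\kappal{l_1}\delta_{l_1} - \kappal{l_2}\delta_{l_2}$ is an integer, it must vanish. Hence $\kappal{l}\delta_l = m$ for a common integer $m$ and all $l$. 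Each $\kappal{l}$ divides $m$, so by pairwise coprimality $m = k\prod_l \kappal{l}$ for some $k\in\Z$. Consequently every feasible $\tmgn$ has the form $\mgn + 2k\lam\prod_l\kappal{l}\,\mathbf{1}$: a constant shift of the true vector by a multiple of $2\lam\prod_l\kappal{l}$. This is the CRT ambiguity. Note that the shift leaves the objective in \eqref{eq:min} unchanged, since $\B\mathbf{1} = 0$.

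\textbf{Main obstacle: removing the common shift.} Since the objective cannot distinguish $k\neq 0$ from $k=0$, the ambiguity must be resolved by the second dynamic-range condition, $\norm{\g}_{\Lp{\infty}} < \lam\prod_l\kappal{l} - \max_l\lambda_l$. I would select, among the feasible family, the one whose entries lie in the admissible range $(-\lam\prod_l\kappal{l}, \lam\prod_l\kappal{l})$. The truth lies there because $\abs{\gln{l}} \leqslant \norm{\g}_{\Lp{\infty}}$.

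For $k\neq 0$, some entry would have magnitude at least $2\lam\prod_l\kappal{l} - \norm{\g}_{\Lp{\infty}} > \lam\prod_l\kappal{l} + \max_l\lambda_l$, which is outside the range, so only $k=0$ survives. The slack $\max_l\lambda_l$ is what makes this selection implementable from the data. Because $\abs{\yln{l}} \leqslant \lambda_l$, the range can be enforced on the residues alone, i.e. by checking $\abs{2\lambda_l\gammaln{l}} < \lam\prod_l\kappal{l}$, without knowing $\gln{l}$.

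The step needing the most care is this uniqueness and selection bookkeeping. In particular, the margin has to absorb the folded-sample magnitude, so the inequalities must be made strict and consistent with the centered modulo convention in \eqref{eq:map}. With $\mgamman$ recovered for each $n$, exact recovery of $\{\g(nT+\Tl{l})\}$ follows from \eqref{eq:decom}.
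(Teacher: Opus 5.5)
Your proposal is correct and follows essentially the same route as the paper: $\tauL<\lam$ forces $\B\mK(\tmgamman-\mgamman)=\mathbf{0}$, pairwise coprimality reduces the remaining ambiguity to a common shift by a multiple of $\prod_{l}\kappal{l}$, and your residue check $\abs{2\lambda_l [\tmgamman]_l}<\lam\prod_{l}\kappal{l}$ is exactly the box constraint the paper builds into \eqref{eq:opt} to force $\hn=0$. One small correction to your commentary: selecting by the range of the entries of $\tmgn$ can also be computed from the data and would only need $\norm{\g}_{\Lp{\infty}}<\lam\prod_{l}\kappal{l}$, so the $\max_l\lambda_l$ slack is needed specifically for the residue-box formulation, not to make the selection implementable.
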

\begin{proof}
We prove the theorem by showing the uniqueness and exactness of the integer solution. By definition, $\gcd{\{\kappal{l}\}_{l\in\id{L}}} = 1$ and $\glcm =\lambda \lcm{\{\kappal{l}\}_{l\in\id{L}}}$, where $\lcm{\cdot}$ and $\gcd{\cdot}$ denote the least common multiple and greatest common divisor operators, respectively. Then we have,
\begin{equation}
\label{eq:lcm}	
\gcd{\{ \kappal{l} \}_{l\in\id{L}}} = 1 \quad \mathsf{and} \quad \glcm = \lam \prod\nolimits_{l\in\id{L}} \kappal{l} .
\end{equation}
From the hypothesis, we can deduce that, 
\begin{equation}
\norm{\g}_{\Lp{\infty}}  < \lam \lcm{\{\kappal{l}\}_{l\in\id{L}}}  - \max\nolimits_{l} \lambda_{l}
\end{equation}
which suggests, $\max_{l} \QO{\lambda_{l}}{\norm{\g}_{\Lp{\infty}}} < \glcm$, where $\QO{\lambda}{\cdot} = 
2\lambda \flr{ \tfrac{ (\cdot) }{2\lambda } + \tfrac{1}{2} }
$. Consequently, we can write that,
\begin{equation}
\label{eq:c2}	
\gammal{l}\sqb{n} \in \sqb{-\Gammal{l} ,  \Gammal{l}  }, \quad \Gammal{l} =  \flr{\tfrac{\norm{\g}_{\Lp{\infty}} + \lambda_{l} }{2\lambda_{l}}} < \frac{\glcm}{2\lambda_{l}}.
\end{equation}
With \eqref{eq:c1} and \eqref{eq:c2}, the minimization in \eqref{eq:min} reduces to,
\begin{equation}
\label{eq:opt}	
\mathop {\min}
_{\substack{
\tmgamman
}} \ 	\norm{\B \myn + 2\B\mL \tmgamman}_{\lp{\infty}},\; \mbox{s.t.} \; \abs{\sqb{\tmgamman}_{l} } <  \tfrac{\glcm}{2\lambda_{l}}, \;\; l\in\id{L}.
\end{equation}
To evaluate the exactness and uniqueness of the solution space,
let $\tmgamman$ be the minimizer to \eqref{eq:opt}, and by definition, we have,
\begin{equation}
\label{eq:B1}	
\norm{\B \myn + 2\B\mL \tmgamman }_{\lp{\infty}} \leqslant \norm{\B \myn + 2\B\mL \mgamman }_{\lp{\infty}} \stackrel{\eqref{eq:c1}}{\leqslant} \tauL.
\end{equation}
Thus, evaluating the deviation $\B\mL (\tmgamman - \mgamman)$ leads to,
\begin{equation}
\label{eq:B2}	
\B\mL (\mgamman - \tmgamman ) = \frac{1}{2} (\B \mgn -  \B \tmgn) , \quad \tmgn = \myn + 2\mL \tmgamman.
\end{equation}
Let $\mK = \mathsf{diag}\rob{\{ \kappal{l}  \}_{l\in\id{L}}}$. Combining \eqref{eq:B1} and \eqref{eq:B2} results in,
\begin{equation}
\norm{\B\mK(\mgamman - \tmgamman)}_{\lp{\infty}} \leqslant \frac{\tauL}{\lam} \quad \mbox{where} \quad \mL \stackrel{\eqref{eq:lcm}}{=} \lam \mK .
\end{equation}
Moreover, the hypothesis on $\norm{\g}_{\Lp{\infty}}$ indicates that,
\begin{equation*}
\max_{n}\max_{l_{1},l_{2}}\abs{ \Tl{l_{1}} - \Tl{l_{2}}}  \Omega \norm{\g}_{\Lp{\infty}}  < {\gcd{\{\lambda_{l}\}_{l\in\id{L}}}} \Longrightarrow \tauL < \lam.
\end{equation*}
And therefore, we can deduce that,
\begin{equation}
\norm{\B\mK(\mgamman - \tmgamman)}_{\lp{\infty}} < 1 \Longrightarrow  \B\mK(\mgamman - \tmgamman) = \mathbf{0}
\end{equation}
which follows from the integer constraints of $\{ \B,\mK,\mgamman,\tmgamman \}$. Since $\B$ is induced by a complete graph, then we can write,
\begin{equation}
\label{eq:B3}	
\mK(\mgamman - \tmgamman)  = \hn \glcm \mathbf{1}, \quad \hn\in\Z
\end{equation}
where $\mathbf{1}$ is an all $1$ vector that spans the null space of $\B$. Note that, \eqref{eq:B3} has a common scaling factor $\glcm$ since the weighted variation between arbitrary two nodes is zero, \ie, $\B\mK(\mgamman - \tmgamman) = \mathbf{0}$.  
Combining with bounded integer range constraint in \eqref{eq:c2}, we can conclude that,
\begin{equation}
\norm{\mgamman - \tmgamman}_{\lp{\infty}} < \max_{l} \frac{\glcm}{\lambda_{l}}  
\Longrightarrow 
\hn  < 1
\end{equation}
implying, $\hn = 0$. This shows that the minimizer $\tmgamman$ to \eqref{eq:opt} is \emph{unique} and \emph{coincides with the ground truth solution $\mgamman$}.
A visual illustration of graph recovery is provided in \fig{fig:pattern}.
\end{proof}
\bpara{Remarks.} The key takeaways from \thmref{thm:1} are threefold:
\begin{enumerate}[leftmargin =*, label = {R\arabic*}),labelsep = 5pt]
\item The least common multiple and greatest common divisor characterizes the maximal dynamic range and error tolerance, respectively. Increasing the number of channels and channel-offset resolution could expand the achievable DR and enhance the system robustness. 
\item The sampling conditions in \thmref{thm:1} are independent of the time step $T$, allowing for high-resolution audio reconstruction from low-resolution measurements. 
\item \thmref{thm:1} could be extended to the quantized measurements, where the smoothness of $\Gg$ is upper bounded by, $2 \max_{l} \norm{\wl{l}}_{\lp{\infty}} \leqslant \Omega \tc \norm{\g}_{\Lp{\infty}} +\max_{l_{1},l_{2}} \norm{\qnl{l_{1}}}_{\lp{\infty}} + \norm{\qnl{l_{2}}}_{\lp{\infty}}$, and $\qnl{l}$ is the quantization noise of node-$l$.
\item \thmref{thm:1} guarantees the uniqueness and exactness of the integer solution, enabling the efficient pairwise retrieval of $\{\gammal{l_{1}}\sqb{n}, \gammal{l_{2}}\sqb{n}\}$ via look-up table.
\end{enumerate}	

\section{High-resolution Signal Recovery}

The proof of \thmref{thm:1} is constructive and leads to an efficient and accurate signal reconstruction approach. Concretely, our method comprises two steps: (i) retrieval of $\{\g\rob{nT + \Tl{l} } \}_{n\in\id{N}}^{l\in\id{L}}$ via solving the integer minimization in \eqref{eq:opt} and (ii) estimation of $\{ \g\rob{ n\Td } \}_{n\in\id{KN}}$ with $\Td = T/K, K\in\Z^{+}$ via linear signal interpolation.

\begin{table}[!htb]
\centering
\caption{Recovery Benchmarking: Parameters and Performance Metrics.}
\label{tab:exp}
\resizebox{0.8\textwidth}{!}{%
\rowcolors{3}{rowgray}{white}
\begin{tabular}{@{}ccccccccccc@{}}
\toprule
\multicolumn{1}{c}{Dataset} 
& $\norm{\g}_{\Lp{\infty}}$
& \multicolumn{2}{c}{Thresholds}
& \multicolumn{1}{c}{$\frac{\Omega}{2\pi}$}
& \multicolumn{1}{c}{$\frac{1}{T}$}
& \multicolumn{1}{c}{Run Time}
& \multicolumn{1}{c}{MSE ($\times 10^{-3}$)} \\
\cmidrule(lr){3-4}
& 
& $\lambda_{1}$ & $\lambda_{2}$
& \khz
& \khz
& second 
& 
&  \\ \midrule

\rom{1}&$14.86$&$1.50$&$5.50$&$4.00$&$42.67$&$4.18$&$0.03$\\
\rom{2}&$6.27$&$1.00$&$3.50$&$6.40$&$25.60$&$3.15$&$0.46$\\
\rom{3}&$13.39$&$1.50$&$5.00$&$5.05$&$19.20$&$2.66$&$0.21$\\
\rom{4}&$7.93$&$1.00$&$4.50$&$5.49$&$12.80$&$2.25$&$0.79$\\
\rom{5} (\fig{fig:exp})&$15.42$&$2.50$&$3.50$&$3.20$&$6.40$&$1.97$&$27.00$\\

\bottomrule
\end{tabular}
}
\end{table}

For step-2, advanced techniques like spline or wavelet interpolators, designed for signal inpainting/interpolation could be utilized in our context \cite{Blu:2001:J}; while, here we use a simple linear interpolation approach for computational efficiency purpose, enabling real-time processing of million-scale measurements.

\section{Experimental Validation}
\label{sec:exp}

We conduct the following numerical experiments to show the performance gains of our asynchronous MC \usf scheme.

\begin{figure}[!t]
\centering	\includegraphics[width=0.8\linewidth]{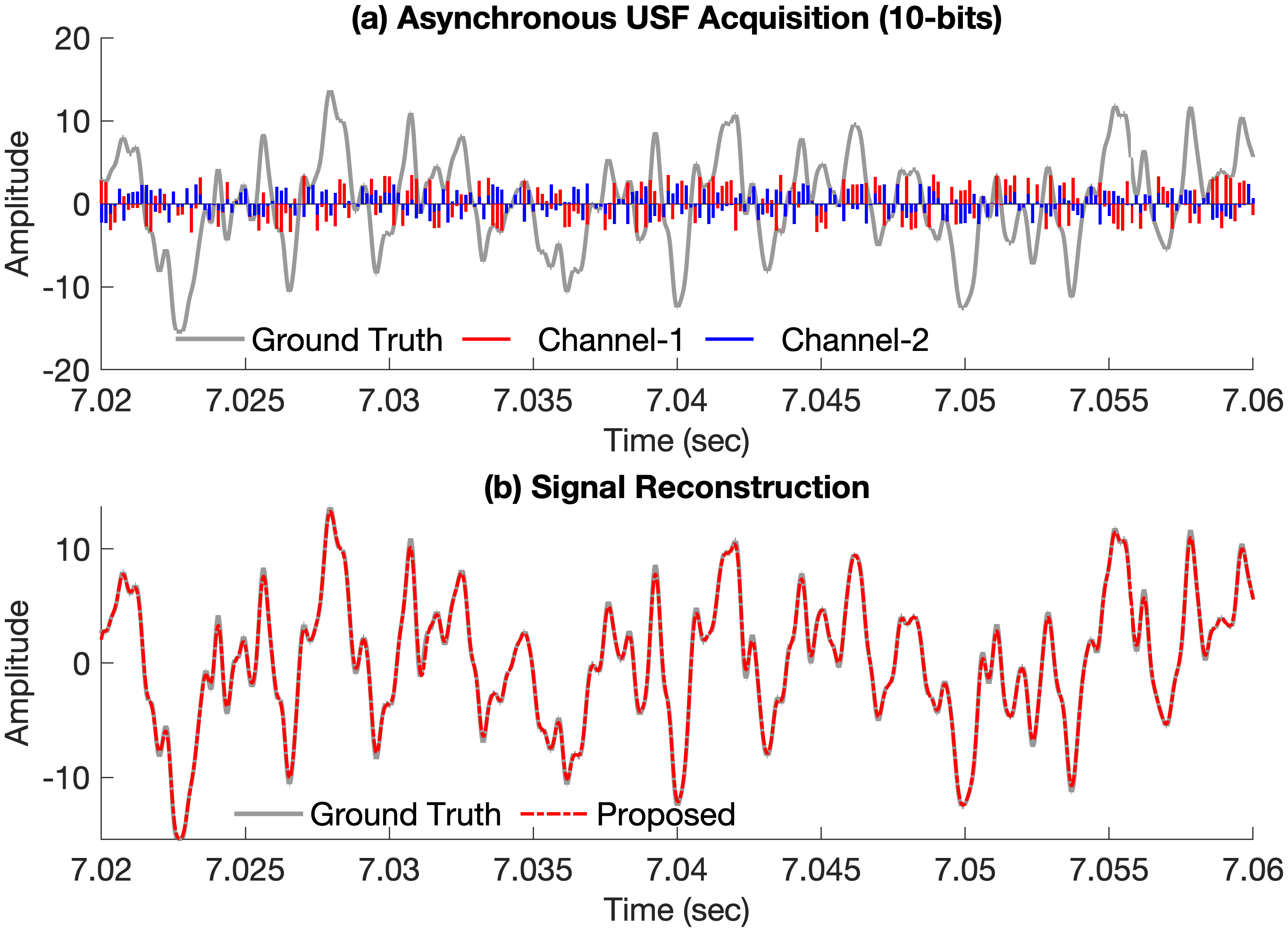}
\caption{
Signal recovery from asynchronous MC modulo measurements. A dual-channel multi-coset architecture is used to capture the input signal near Nyquist-rate with $10$-bit quantization, as shown in (a). Despite the million-scale data challenge and quantization noise, our method offers an efficient, high-precision signal recovery, validating the results in \thmref{thm:1}.
} 
\label{fig:exp}
\end{figure}

\bpara{Experimental Protocol.} We randomly generate the bandlimited signals with maximal frequency ranging from $3.2$ \khz to $6.5$ \khz. A dual-channel multi-coset architecture (see \subfig{fig:demo}{b}{}) is utilized to capture the input signal $\g$ with $10$-bit quantization, $\Tl{l} = l\Td$ and $\Td$ = 3 $\mu$s, resulting in approximately $4\times10^5$ samples in all cases.
MSE (mean-squared error) is used to evaluate the signal recovery precision.
We progressively reduce the sampling rate to assess the performance limit of our method.
Experimental parameters like folding thresholds, sampling frequency, and input DR are tabulated in \tabref{tab:exp}.

As shown in \tabref{tab:exp}, in all scenarios, our method recovers the samples at $\{nT + l\Td\}_{n,l}$ down to the precision of the modulo-ADCs, \ie, $\max_{l} \lambda_{l}/(2^B - 1)$, effectively validating the results in \thmref{thm:1} and robustness against quantization. In medium oversampling regime, the proposed interpolation approach achieves accurate signal estimates ($\mathrm{MSE} < 5\times 10^{-4}$) up to $K=20$ times upsampling.
When pushing close to the Nyquist rate (last two rows in \tabref{tab:exp}), the temporal sample correlation diminishes, creating challenges for oversampling-based methods, such as non-linear filtering \cite{Bhandari:2020:Ja} and Fourier-partitioning \cite{Bhandari:2021:J}. Despite the voluminous data and low sampling rate, our method offers reasonable signal reconstruction with significant dynamic range expansion, as shown in \fig{fig:exp}.

\section{Conclusion}

Unlimited Sensing Framework is an emerging digitization paradigm that delivers concurrent high-dynamic-range and high-digital-resolution, beyond the capacity of conventional sampling schemes.
In this paper, we extend \usf to asynchronous multi-channel architectures where each channel samples contain irregular time offset.
Harnessing channel redundancy, we show that constrained smoothness across channels results in an exact and unique solution to the constructed integer minimization. This enables an efficient method to solve the potentially intractable combinatorial optimization, which is validated by concrete numerical experiments.

\bibliographystyle{IEEEtran_URL}

\end{document}